\setlist[itemize]{noitemsep}
\def\knn{{$k$NN}}
\def\kNN{{$k$NN}}
\def\RkNN{{R$k$NN}}
\def\rknn{{R$k$NN}}
\def\rehub{{\emph{ReHub}}}
\def\ReHub{{\emph{ReHub}}}
\def\Rehub{{\emph{ReHub}}}
\newcolumntype{M}[1]{>{\centering\arraybackslash}m{#1}}
\newcommand\tikzmark[2]{%
\tikz[remember picture,overlay] 
\node[inner sep=0pt,outer sep=2pt] (#1){#2};%
}
\newcommand\link[2]{%
\begin{tikzpicture}[remember picture, overlay, >=stealth, shorten >= 1pt]
  \draw[thick, red, ->] (#1.east) to  (#2.west);
\end{tikzpicture}%
}
\newcommand\linkkk[2]{%
\begin{tikzpicture}[remember picture, overlay, >=stealth, shorten >= 1pt]
  \draw[thick, blue, ->] (#1.east) to  (#2.west);
\end{tikzpicture}%
}
\newcommand\linkk[2]{%
\begin{tikzpicture}[remember picture, overlay, >=stealth, shorten >= 0pt]
  \draw [very thick,  dashed, ->] (#1.east) to  (#2.west);
\end{tikzpicture}%
}
\def\knn{{$k$NN}}
\def\kNN{{$k$NN}}
\def\RkNN{{R$k$NN}}
\def\rknn{{R$k$NN}}
\def\rehub{{\emph{ReHub}}}
\def\ReHub{{\emph{ReHub}}}
\def\Rehub{{\emph{ReHub}}}
\date{\today}
\begin{document}
\title{ReHub. Extending Hub Labels for Reverse $k$-Nearest Neighbor Queries on Large-Scale networks}


\author{
  Alexandros Efentakis\\  \small{Research Center ``Athena''} 
 \\ \small{efentakis@imis.athena-innovation.gr}
  \and 
 Dieter Pfoser\\ \small{George Mason University} 
 \\ \small{dpfoser@gmu.edu}
}



\maketitle

\begin{abstract}
Quite recently, the algorithmic community has focused on solving multiple shortest-path query problems beyond simple vertex-to-vertex queries, especially in the context of road networks. Unfortunately, this research cannot be generalized for large-scale graphs, e.g., social or collaboration networks, or to efficiently answer Reverse $k$-Nearest Neighbor (\RkNN) queries, which are of practical relevance to a wide range of applications. To remedy this, we propose \ReHub, a novel main-memory algorithm that extends the Hub Labeling technique to efficiently answer \RkNN\ queries on large-scale networks. Our experimentation will show that \ReHub\ is the best overall solution for this type of queries, requiring only minimal preprocessing and providing very fast query times.
\end{abstract}
%


\section{Introduction}
\label{sec:intro} 

During the last two decades, the algorithmic community has produced significant results regarding vertex-to-vertex shortest-path queries, especially in the context of transportation networks (cf.~\cite{bast2014} for the latest overview). Recently, this focus shifted to additional types of shortest-path (SP) queries, 
such as \emph{one-to-all} (finding SP distances from a source vertex $s$ to all other graph vertices), \emph{one-to-many} (computing the SP distances between the source vertex $s$ and all vertices of a set of targets~$T$), \emph{range} (find all nodes reachable from $s$ within a given timespan), \emph{many-to-many} (calculate a distance table between two sets of vertices $S$ and $T$) and \kNN\ queries. Recent contributions here include \cite{delling2011phast} (one-to-all), \cite{delling2011g} (one-to-many, many-to-many), \cite{efentakis2014} (one-to-all, range, one-to-many) and  \cite{delling2013h,efentakis2014c} ($k$NN queries). Unfortunately, most of these methods target road networks and, thus, cannot easily be used for denser, small-diameter graphs, such as social and collaboration networks. 

In the case of large-scale networks, the prevailing technique for vertex-to-vertex shortest-path queries is based on the 2-hop labeling, or, \emph{Hub Labeling} (HL) algorithm \cite{gavoille2001,cohen2002}. During preprocessing, we calculate for every vertex $v$ a forward label $L_f(v)$ and a backward label~$L_b(v)$. These labels are subsequently used to very fast answer vertex-to-vertex shortest-path queries. 
The HL technique has been adapted successfully to road networks \cite{abraham2011f,abraham2012f,delling2013v,akiba2014f} and quite recently has also been extended to  undirected, unweighted graphs~\cite{akiba2013f,delling2014f,jiang2014}. 
The HL method has also been used for one-to-many, many-to-many and \kNN\ queries in road networks~\cite{delling2011g,abraham2012hldb}.

Another very important type of queries is the \emph{Reverse $k$-nearest neighbor} (\RkNN) problem, initially proposed in \cite{korn2000}. Given a query point~$q$ and a set of objects $P$, the \RkNN\ query retrieves all the objects in $P$ that have $q$ as one of their $k$-nearest neighbors according to a distance function $dist()$. 
In Euclidean space, the distance $dist(s,t)$ refers to the Euclidean distance between  two objects $s$ and~$t$. 
For graphs, $dist(s,t)$ corresponds to the minimum network distance between the two objects. \RkNN\ queries may be used in various domains, ranging from geomarketing to location-based services and a wide-range of applications, including resource allocation, profile-based marketing and decision support~\cite{encyclopedia2009}. Despite their importance and the fact that there is some scientific literature discussing \RkNN\ queries for road networks \cite{safar2009,cheema2012,borutta2014}, to the best of our knowledge, the only \RkNN\ work focusing on other types of graphs is~\cite{yiu2006}. Unfortunately, all those previous works share some inherent limitations, such as assuming that the graph does not fit in main memory (and therefore is stored on secondary storage), require query times of a few seconds which prohibits their use in real-time applications and most importantly, they do not scale particularly well with respect to the network size, the object density, the distribution of objects and the cardinality of the reverse $k$-nearest neighbor result.  

Putting everything together, the ambition of this work is to provide an efficient and fast main-memory algorithm for answering \RkNN\ queries on large-scale graphs. 
Our proposed algorithm, termed \ReHub\ (\emph{Re}verse \kNN\ + \emph{Hub} labels) extends the Hub Labeling approach to efficiently handle these queries. 
The main advantage of \ReHub\ is that its slower \emph{Offline phase} depends only on the location of the objects $P$ and has to run only once, whereas its \emph{Online phase} (which depends on the query vertex $q$) is very fast. Still, even the costlier offline phase hardly needs more than~1$s$, while the online phase requires typically less than~$1ms$, making \ReHub\ the only \RkNN\ algorithm fast enough for real-time applications and big, large-scale graphs. Moreover, the necessary data structures for answering \RkNN\ queries may also answer \kNN\ queries and require only a small fraction of the memory required for storing the created hub labels for the typical case of vertex-to-vertex queries. 
Throughout this work, we use undirected and unweighted graphs which constitute an important graph class (containing social and collaboration networks) but also pose a significant challenge to Hub Labeling algorithms because of the sheer size of the created labels. However, our method could be easily adapted for other graph classes where the hub-labeling algorithm typically performs well, including road networks.

The outline of this work is as follows. Section~\ref{sec:related_work} presents related work. Section~\ref{sec:contribution} describes the \ReHub\ algorithm and provides a theoretical analysis of its performance. Experiments showcasing \ReHub 's benefits are provided in Section~\ref{sec:experiments}. Finally, Section~\ref{sec:conclusions} gives conclusions and directions for future work.

\section{Related work}
\label{sec:related_work}

Given a query point $q$ and a set of objects $P$, the \emph{RkNN} query (also referred as the monochromatic \RkNN\ query) retrieves all the objects that have $q$ as one of their $k$-nearest neighbors, according to a distance function $dist()$. Formally \RkNN$(q)~=~\{p \in~P:dist(p,q) \leq dist(p,p_k) \}$ where $p_k$ is the $k$-Nearest Neighbor (\knn) of $p$. In Euclidean space, the distance $dist(s,t)$ refers to the Euclidean distance between two objects $s$ and~$t$. In graph networks, $dist(s,t)$ corresponds to the minimum network distance between the two objects. Throughout this work we use undirected, unweighted graphs $G(V,E)$ (where $V$ represent vertices and $E$ represents arcs), we assume that objects are located on vertices and we refer to \emph{snapshot} \RkNN\ queries, i.e, objects are not moving. Also, similar to previous works, the term \emph{object density} $D$ refers to the ratio~$|P|/|V|$. 

There is extensive literature focusing on \RkNN\ queries in Euclidean space. Since our work focuses on graphs, we only discuss the latter. 
Regarding road networks, the work of~\cite{safar2009} uses Network Voronoi cells (i.e., the set of vertices and arcs that are closer to the generator object) to answer \RkNN\ queries. This work has only been tested on a relatively small network ($110K$ arcs) and all precomputed information is stored in a database. Despite its costly preprocessing (for calculating the Network Voronoi cells), queries still require $1.5s$ for $D=0.05$ and $k=1$. The query times further increase to $32s$ for $k=20$ . 
Later works focusing on \emph{continuous \RkNN\ queries} on road networks~\cite{cheema2012} have only been tested with even smaller road networks ($22K$ arcs) and are different in scope from our work, which focuses on snapshot \RkNN\ queries. 
To the best of our knowledge, the only work focusing on other graph classes (besides road networks) is~\cite{yiu2006}. This work, too, has only been tested on sparse networks, e.g., road networks, grid networks (max degree 10), p2p graphs (avg degree 4) and a very small, sparse co-authorship graph ($4K$ nodes). Furthermore, all experimentation there for values of~$k>1$ (up to $k=8$) refers to road networks, so the scalability of the proposed algorithms for denser graphs and larger values of $k$ is debatable. Interestingly enough, this work  proposed the \emph{Eager M} algorithm that, similar to \ReHub, has an offline and an online phase (that uses the precomputed information obtained from the offline phase) to accelerate \RkNN\ queries. Unfortunately, both phases are unoptimized. The offline phase uses a slow, combined network expansion from all objects, which cannot scale for very dense graphs or sparse objects. The faster online phase performs a pruned Dijkstra-like expansion from the query vertex and thus, will also be too slow for denser graphs and small values of $D$. 
Recently, Borutta et al.~\cite{borutta2014} extended this work for time-dependent road networks, but the presented results were also not encouraging. The larger road network tested had $50k$ vertices (queries require more than $1s$ for $k=1$) and for a road network of $10k$ nodes and $k=8$, \RkNN\ queries take more than $0.3s$ (without even adding the I/O cost). 
In a nutshell, all existing contributions and methods have not been tested on dense, large-scale graphs, do not scale well for increasing $k$ values, and their performance highly depends on the object density $D$.       

Our work builds on the 2-hop labeling or Hub Labeling (HL) algorithm of \cite{gavoille2001,cohen2002} in which, during preprocessing, we store at every vertex~$v$ a forward $L_f(v)$ and a backward label $L_b(v)$. 
The forward label $L_f(v)$ is a sequence of pairs $(u, dist(v, u))$, with $u \in V$. 
Likewise, the backward label $L_b(v)$ contains pairs $(w, dist(w, v))$.
Vertices $u$ and $w$ denote the \emph{hubs of v}. 
The generated labels conform to \emph{the cover property}, i.e., for any $s$ and $t$, the set $L_f(s) \cap L_b(t)$ must contain at least one hub that is on the shortest $s-t$ path. 
For undirected graphs $L_b(v) = L_f(v)$. To find the network distance $dist(s, t)$ between two vertices $s$ and $t$, a HL query must find the hub $v \in L_f(s) \cap L_b(t)$ that minimizes the sum $dist(s, v)+dist(v,t)$. Since the pairs in each label are sorted by hub, this takes linear time by employing a coordinated sweep over both labels. The HL technique has been successfully used for road networks in \cite{abraham2011f,abraham2012f,delling2013v,akiba2014f}. In the case of large-scale graphs, the Pruned Landmark Labeling (PLL) algorithm of~\cite{akiba2013f} ``\emph{produces a minimal labeling for a specified vertex ordering}''~\cite{delling2014f}. 
Although this work orders vertices by degree, the later work of~\cite{delling2014f} improves the suggested vertex ordering and the storage schema of the hub labels for maximum compression. On a similar note, Jiang et al.~\cite{jiang2014} propose their HopDB algorithm to provide an efficient HL index construction when the given graphs and the corresponding index are too big to fit into main memory. The HL method has also been used for one-to-many, many-to-many and \kNN\ queries on road networks in~\cite{delling2011g} and~\cite{abraham2012hldb} respectively. 
The core contribution of our work is to extend existing HL techniques in the context of \RkNN\ queries on large-scale graphs and the proposed \Rehub\ algorithm, presented in the following section.



\section{The \emph{ReHub} algorithm}
\label{sec:contribution} 

What follows is the description of the \ReHub\ (\emph{Re}verse \kNN + \emph{Hub} labels) algorithm that extends the Hub Labeling approach to efficiently handle \RkNN\ queries on large-scale graphs. 
\ReHub\ consists of two distinct, independent phases:  
(i)~A slower, costlier \emph{Offline} phase that takes place after the creation of the hub labels and depends only on the objects $P$ (regardless of the query vertex $q$).
(ii)~An \emph{Online} phase that uses the auxiliary data structures created during the Offline phase to compute the actual \rknn\ query results. 
The main benefit of the \ReHub\ algorithm is that the costlier offline phase has to run only once and may service all \RkNN\ queries for a specific set of objects, whereas the online phase (that actually depends on the query vertex $q$) is very fast (typically less than a $1ms$). 
Hence, \ReHub\ may be used within the context of real-time applications, operating on large-scale graphs.

\subsection{Offline Phase}
\label{sub:offline} 

 \begin{figure}[!tb]
 \centering
 \includegraphics[width=0.21\columnwidth]{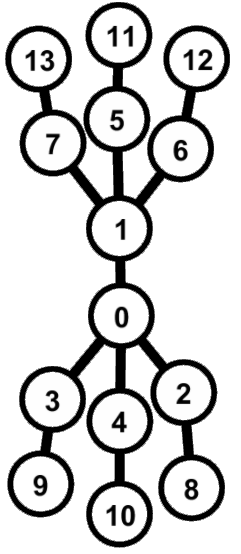}
 \qquad
 \begin{tabular}[b]{|c|c|}  \hline 
\bfseries{Vertex} & \bfseries{Hub Labels (h,d)} \\ \hline 
0 & (0,0)\\ \hline
1 & (0,1), (1,0)\\ \hline
2 & (0,1), (2,0)\\ \hline
3 & (0,1), (3,0)\\ \hline
\bfseries{4} & \bfseries{ (0,1), (4,0) }\\ \hline
5 & (0,2), (1,1), (5,0)\\ \hline
6 & (0,2), (1,1), (6,0)\\ \hline
7 & (0,2), (1,1), (7,0)\\ \hline
8 & (0,2), (2,1), (8,0)\\ \hline
9 & (0,2), (3,1), (9,0)\\ \hline
\bfseries{10} & \bfseries{ (0,2), (4,1), (10,0) }\\ \hline
11 & (0,3), (1,2), (5,1), (11,0)\\ \hline
\bfseries{12} & \bfseries{ (0,3), (1,2), (6,1), (12,0) }\\ \hline
13 & (0,3), (1,2), (7,1), (13,0) \\ \hline
   \end{tabular}
    \captionlistentry[table]{A sample Graph $G$ and its adjacency list}
    \captionsetup{labelformat=andtable}
    \caption{A sample Graph $G$ and the created hub-labels}
    \label{fig:graph}
  \end{figure}

The offline phase of the \ReHub\ algorithm takes place after the creation of the hub labels. Although the \ReHub\ algorithm works with any correct Hub Labeling algorithm, in this work we generate the necessary labels using the PLL algorithm of~\cite{akiba2013f}, as provided by its authors in~\cite{akiba2013fcode}. 
To highlight the results of the PLL algorithm, the generated labels for the sample undirected, unweighted graph $G$ of Figure~\ref{fig:graph} are shown in Table~\ref{fig:graph}. 
In the remainder of this work we will refer to those labels as the \emph{forward labels}. 
We also assume that the target objects are located at vertices 4,10,12, i.e., $P=\{4,10,12\}$. The respective entries are highlighted in Table~\ref{fig:graph}. 
For each vertex~$v$, the forward label $L(v)$ is an array of pairs $(u,dist(v,u)$ sorted by hub vertex $u$. This is the starting point for the offline phase of the \ReHub\ algorithm, which in turn is divided in three smaller substages: (i)~the \emph{kNN backward labels} construction, (ii)~the \emph{batch \kNN\ calculations} from all objects, and (iii)~the \RkNN\ \emph{backward labels} construction. 
Each of these stages will be described in the following.  

\subsubsection{The \kNN\ backward labels construction}
\label{subsub:knnTs} 

To efficiently answer one-to-many queries with hub labels, we need to store separately the hub labels of the target objects $P=~\{P_1, \dots P_i,\dots P_{|p|}\}$ ordered by hubs~\cite{delling2011g}. For each such hub $u$, those \emph{backward labels-to-many} is an array of pairs $(P_i,d(u,P_i)$. 
Expanding this approach for \kNN\ queries, \cite{abraham2012hldb}~showed that if we know the number~$k$ in advance (or the maximum $k$ we will service for \kNN\ queries) , then for each hub we only need to keep the $k$-pairs with the smallest distances per hub. 
Although these previous works focused on road networks, the correctness of this approach still applies to undirected, unweighted graphs. This process for the sample graph and $k=2$ is shown in Table~\ref{tab:knn-bck}. For small-diameter graphs (like the ones used in this work) we will have many ties (in terms of distance), but keeping at most $k$-labels still ensures correctness. 

\begin{table}[tb]
\centering
\caption{The \ReHub\ \kNN\ backward labels creation for the sample graph $G$, $k=1$ and \\$P=\{4,10,12\}$}
\small{
\begin{tabular}{|c|c|c|c|}  \hline 
\bfseries{} & \bfseries{Backward Labels} & \bfseries{\kNN\ Backward Labels} & \bfseries{\ReHub\ \kNN\ } \\ 
\bfseries{Hub} & \bfseries{ (to-many) } & \bfseries{ (k=2) } & \bfseries{Backward Labels  (k=1) } \\ \hline
0 & (4,1), (10,2), (12,3) & (4,1), (10,2)  & (0,1), (1,2) \\ \hline
1 & (12,2) & (12,2) &  (2,2 )\\ \hline
4 & (4,0), (10,1) & (4,0),(10,1) & (0,0), (1,1) \\ \hline
6 & (12,1)& (12,1)  & (2,1) \\ \hline
10 &(10,0) & (10,0) & (1,0)\\ \hline
12 & (12,0) & (12,0) & (2,0) \\ \hline
\end{tabular}   
}
 \label{tab:knn-bck}
\end{table}


\begin{codebox}
\Procname{$\proc{{knnLab}}\>(\id{P}, |P|, \id{k}, forwLabels,  kNNLab)$}
\li Initialize$(kNNLab,(|V|, BoundPQue(k+1)) )$
\li \For $i \gets 0 $ \To $|P|$
\li \Do
 \For $j \gets 0$ \To $forwLabels[P[i]].size$
\li \Do
$hub \gets forwLabels[P[i]][j].hub$
\li $d \gets forwLabels[P[i]][j].dist$
\li $kNNLab[hub].push(i,d)$
\End
\End
\End
\end{codebox}

Due to the pruning of the PLL algorithm, in our example, \emph{\kNN\ backward labels} do not necessarily have as many as $k$-pairs per hub. To create the \kNN\ backward labels for \rehub, we need to do some additional modifications. (i)~When answering \RkNN\ queries, we must assume that $k=k+1$ during the construction of the \kNN\ backward labels. This is necessary, since in our example the NN of object $10$ (for $k=1$) is by definition the same object, but for \RkNN\ queries with $k=1$, the NN neighbor of $10$ is object $4$. (ii)~Instead of storing the vertex IDs $P_i$ of the objects in the \kNN\ backward Labels, we store the array index $i$ of each object, as shown in the last column of Table~\ref{tab:knn-bck}. This facilitates faster processing during the remaining substages of the offline and online phase of the \ReHub\ algorithm. 
On the technical side, the \kNN\ backward labels creation is quite fast, since we only have to loop through the forward labels of the objects in $P$ and use a vector-based bounded priority queue of size $k+1$ per hub to calculate the $k+1$ pairs with the smallest distances per hub. 
This method offers two major advantages. (i)~We do not need to build the intermediate backward labels-to-many data structure (column~2, Table~\ref{tab:knn-bck}), which would be much slower, and (ii)~when looping through the forward labels of each object, pairs with distances greater than the $k+1$ worst distance previously found for a specific hub may be safely ignored.

The pseudocode for the \kNN\ backward labels construction is shown in procedure \proc{knnLab} and throughout this process, for each hub  we use a vector-based bounded priority queue of size $k+1$ that stores pairs in the form $(idx,dist)$ ordered by distance.

\subsubsection{Batch \kNN\ calculations from objects}
\label{subsub:knnBatch} 

After creating the \kNN\ backward labels (column 4, Table~\ref{tab:knn-bck}), we need to calculate the $k$-nearest neighbors of each object. To this end, we perform a total of $|P| \times kNN$ calculations, using the created \kNN\ backward labels. 
Each of those \kNN\ computations uses the method implicitly described in~\cite{abraham2012hldb}, with the additional constraint that for each object when traversing the \kNN\ backward labels of one of its hubs, we skip the labels corresponding to this specific object index. 

\begin{codebox}
\Procname{$\proc{{BatchKnnCalc}}\>(\id{P}, |P|, \id{k}, forwLabels, kNNLab, kNNResults)$}
\li Initialize$(kNNResults,(|P|, BoundPQue(k)) )$
\li  \Parfor $i \gets 0 $ \To $|P|$
\li \Do
 \For $j \gets 0$ \To $forwLabels[P[i]].size$
\li \Do
$hub \gets forwLabels[P[i]][j].hub$
\li $d \gets forwLabels[P[i]][j].dist$
\li \For $k \gets 0$ \To $kNNLab[hub].size$
\li \Do
$idx \gets kNNLab[hub][k].idx$
\li \If $idx != i  $
\li \Then
 $d2 \gets d+kNNLab[hub][k].dist$
\li $kNNResults[i].push(idx,d2)$
\End	
\End
\End
\End
\End
\end{codebox}

The simplified pseudocode for the batch \kNN\ calculations from objects is shown in procedure~\proc{BatchKnnCalc}. The $kNNResults$ are also stored in a $|P|$-sized vector of  vector-based bounded priority queues of size $k$ that store pairs in the form $(idx,dist)$ ordered by distance.  For each such object, when traversing the \kNN\ backward labels of one of its hubs, we skip the pairs corresponding to the index of this specific object (Line~8 in the pseudocode). Moreover, every time a new pair is pushed to the corresponding queue (Line~10), our customized push operation checks if the ``pushed'' object index already exists in the queue with a smaller or equal distance value than the pushed pair. If yes, we can safely ignore this pair. If, on the other hand, this object index exists in the queue with a larger distance value, we update this distance value and resort the queue. If the pushed object index does not already exist in the queue, our custom push operation checks if the queue has less than $k$ items. In that case, the new pair enters the queue and the queue is resorted. If the queue has already $k$ items, our push operation checks if the new pair is better (i.e.,~corresponds to a smaller distance) than the last ($k$) element of the queue. If yes, the last element is popped, the new pair enters the queue at the end and the queue is resorted. Since each queue is basically a vector of size $k$, popping back, pushing back and resorting this (rather small) priority queue are very fast operations. 

We can further accelerate the process, if every time a new pair $(idx,d2)$ enters the $kNNResults[i]$ queue for a specific object, we check if the queue already has $k$-items; In that case we store the worst label distance as a separate variable. If the distance $d$ (Line~5) or the distance $d2$ (Line~9) are greater than this worst distance, we can safely skip this particular pair. Especially, in the second case (distance $d2$ - Line 9) we can exit the third loop (Line~6) completely, since the \kNN\ backward label of each hub is ordered by distance. This optimization (not shown in the pseudocode for readability) accelerates significantly each individual \kNN\ calculation. 

\begin{table}[tb]
\centering
\caption{Batch kNN calculations process for the sample graph $G$, $k=1$ and \\$P=\{4,10,12\}$}
\small{
\begin{tabular}{|c|c|c|c|c|}  \hline 

\bfseries{ Obj. } & \bfseries{Forward Labels} & & \bfseries{ReHub kNN }  & \bfseries{ kNN Results} \\ 
\bfseries{ ID} & \bfseries{of Objects} & \bfseries{  Hub } &\bfseries{Backward Labels (k=1)} & \bfseries{ (idx, dist) } \\ \hline
\multirow{2}{*} {\bfseries{ 4 }} & \multirow{2}{*} { (0,1), (4,0) \tikzmark{a} } & \tikzmark{1} { 0 }  & (0,1), (1,2) & \multirow{2}{*} {\bfseries{ (1,1) }}\\ \cline{3-4}
  & & \tikzmark{2} { 1 } &  (2,2 ) &\\ \hline
\multirow{2}{*} {\bfseries{ 10 }} & \multirow{2}{*} { (0,2), (4,1), (10,0) \tikzmark{b} } & \tikzmark{3} { 4 } & (0,0), (1,1) & \multirow{2}{*} {\bfseries{ (0,1) }} \\ \cline{3-4}
&&  \tikzmark{4} { 6}  &   (2,1) & \\ \hline
\multirow{2}{*} { \bfseries{12}} & \multirow{2}{*} { (0,3), (1,2), (6,1), (12,0) \tikzmark{c} }& \tikzmark{5} { 10 } & (1,0) & \multirow{2}{*} { \bfseries{ (0,4) } }\\  \cline{3-4}
& & \tikzmark{6} { 12 } & (2,0) &\\ \hline
\end{tabular}

\linkkk{a}{1}
\linkkk{a}{3} 
\linkk{b}{1}
\linkk{b}{3} 
\linkk{b}{5} 
\link{c}{1} 
\link{c}{2}
\link{c}{4} 
\link{c}{6}
}
 \label{tab:knn-batch}
\end{table}

The results of this process are shown on Table \ref{tab:knn-batch}, where the combination of the forward labels of the objects $\{4,10,12\}$ with the  \kNN\ backward labels shows that the \kNN\ of object~$4$ is the object with index~$1$, i.e., object~$10$, with distance~$1$.
The \kNN\ of object $10$ is the object with index~$0$ (object~$4$) with the respective distance~$1$ and finally, the \kNN\ of object $12$ is the object with index~$0$ (object~$4$) with the respective distance~$4$. To facilitate faster computation, each \kNN\ computation may be performed in parallel (Line~2 of procedure~\proc{BatchKnnCalc}) since there is no interaction between the individual \kNN\ calculations. Considering this is the slower substage of the offline phase, employing parallelism significantly drops the total preprocessing time required for the \emph{ReHub}'s offline phase. 

\subsubsection{The RkNN backward labels construction}
\label{subsub:rknnTs} 

After calculating the \kNN\ of each object, for answering \RkNN\ queries it would suffice to run an \emph{one-to-many} HL query from the query vertex $q$ to all objects, by constructing and using the \emph{backward labels-to-many} of objects $P$ (see column 2, Table~\ref{tab:knn-bck}) and then loop through the calculated distances to see if they are smaller or equal to the \kNN\ distances calculated by the previous step. 
But we can do much better: We construct an alternative data structure, referred hereafter as the \emph{RkNN backward labels}, based on the observation that \emph{we need to calculate distances to a specific object, if and only if those distances are equal or smaller than the distance of the \kNN\ of this object}. If the objects are uniformly distributed through the graph, this optimization ensures that only hubs of relatively small distances from each object are added to the \RkNN\ backward labels. Therefore, during the online phase, if the query vertex $q$ is faraway from some objects, there would be no matching hubs between those objects and the query vertex. 

\begin{table}[tb]
\centering
\caption{\RkNN\ backward labels construction for the sample graph  $G$, $k=1$ and \\$P=\{4,10,12\}$}
\small{
\begin{tabular}{|c|c|c|c|c|}  \hline 
\bfseries{ Obj. } & \bfseries{ kNN Result } & \bfseries{ Forward Labels }  & & \bfseries{ RkNN Backward Labels }   \\ 
\bfseries{ ID} & \bfseries{ (idx, dist) } & \bfseries{ of Objects }   & \bfseries{  Hub } &\bfseries{ (k=1) } \\ \hline
\multirow{2}{*} {\bfseries{ 4 }} & \multirow{2}{*} { (1,\textbf{1}) } & \multirow{2}{*} { (0,1), (4,0) }   &  { 0 }  &  (0,1), (2,3)  \\ \cline{4-5}
  & & & { 1 } &  (2,2 ) \\ \hline
\multirow{2}{*} {\bfseries{10} } & \multirow{2}{*} { (0,\textbf{1}) } & \multirow{2}{*} { \sout{(0,2)}, (4,1), (10,0) }   &  { 4 } & (0,0), (1,1) \\ \cline{4-5}
&&&  { 6 }  &   (2,1)  \\ \hline
\multirow{2}{*} { \bfseries{12} }  & \multirow{2}{*} { (0,\textbf{4}) } &\multirow{2}{*} { (0,3), (1,2), (6,1), (12,0) }     & { 10 } & (1,0)\\  \cline{4-5}
&& &{ 12 } & (2,0) \\ \hline
\end{tabular}
}
 \label{tab:rknn-bck}
\end{table}

The resulting pseudocode for the \RkNN\ backward labels construction is shown in procedure~\proc{RknnLab} and the entire process is highlighted in Table~\ref{tab:rknn-bck}. When we build the \RkNN\ labels for object $10$, we skip the pair $(0,2)$ because the \emph{NN} of object $10$ is within distance of $1$ and therefore pairs with greater distances than that (for this particular object) may be safely ignored. Again, when building the \RkNN\ backward labels we use the objects array indexes, instead of their IDs. 

\begin{codebox}
\Procname{$\proc{{RknnLab}}\>(\id{P}, |P|, \id{k}, forwLabels, kNNResults,RkNNLab)$}
\li Initialize$(RkNNLab,(|V|, vector<$(idx,dist)$>) )$
\li \For $i \gets 0 $ \To $|P|$
\li \Do
 \For $j \gets 0$ \To $forwLabels[P[i]].size$
\li \Do
$d \gets forwLabels[P[i]][j].dist$
\li \If $d <= kNNResults[i][k-1]  $
\li \Then
$hub \gets forwLabels[P[i]][j].hub$
\li $RkNNLab[hub].push\_back(i,dist)$
\End
\End
\End
\End
\end{codebox}

Several interesting observations can be made by comparing Tables~\ref{tab:knn-bck} and~\ref{tab:rknn-bck}. Firstly, as expected, the number of \RkNN\ backward labels (column~5, Table~\ref{tab:rknn-bck})  is smaller than the backward labels-to-many (column~2, Table~\ref{tab:knn-bck}). Although for our small sample graph~$G$ this difference is minimal, for larger graphs it becomes significant.
Therefore, using the \RkNN\ backward labels will significantly improve the online phase of the \ReHub\ algorithm. This will be clearly showcased in our experimentation presented in Section~\ref{sec:experiments}.
Second, the \kNN\ backward labels (column 4, Table~\ref{tab:knn-bck}) are different than \RkNN\ backward labels (column 5, Table~\ref{tab:rknn-bck}). The added benefit is that by using the \kNN\ backward labels we can still answer \kNN\ queries and by using the \RkNN\ backward labels we can answer \RkNN\ queries within the same framework.     

\subsection{Online Phase}
\label{sub:online} 

The offline phase of the \ReHub\ algorithm runs only once for a specific set of objects~$P$. Its final output is (i)~a matrix of size $|P| \times k$ of (ordered by distance per row) pairs $(idx, dist)$ that contain the \kNN\ of each object and (ii)~the \RkNN\ backward labels. 
The following online phase of the \ReHub\ algorithm is basically a modified one-to-many HL query from the query vertex $q$ that operates on the \RkNN\ backward labels and is described by the pseudocode of procedure~\proc{OnlinePhase}. 
The output of the online phase is a vector (denoted $out$ in the pseudocode) of size $|P|$ with all values set to infinity, except those that belong to the indexes of the objects of the \RkNN\ set; those values are set to the correct distances from query vertex $q$ to the respective objects. 
In our running example of the sample graph $G$, $P=\{4,10,12\}$ and $k=1$, the online phase for a \RkNN\ query from vertex $0$ would only have to visit the \RkNN\ backward labels of hub $0$ (see Tables~{\ref{fig:graph} and~\ref{tab:rknn-bck}) and would output the result $out =\{1,\infty,3\}$, meaning that the objects $4$, $12$ belong to the \RkNN\ set of vertex $0$ with distances $1$ and $3$ respectively.

\begin{theorem}
\label{theorem1}
The \ReHub\ algorithm is correct.
\end{theorem}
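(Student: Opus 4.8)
The plan is to establish correctness by reducing the claim about \ReHub\ to the already-established correctness of Hub Labeling and $k$NN computation with hub labels, and then verifying that each of the three offline substages preserves exactly the information needed for the online phase to return the set $\{p \in P : dist(p,q) \leq dist(p,p_k)\}$. First I would fix notation and restate the target: for a query vertex $q$, the online phase must output, for each object index $i$, the value $dist(q,P_i)$ whenever $dist(q,P_i) \leq dist(P_i,p_k^{(i)})$ (where $p_k^{(i)}$ denotes the $k$-th nearest neighbor of $P_i$ among the other objects), and $\infty$ otherwise. Since for undirected graphs $L_b(v)=L_f(v)$ and the labels satisfy the cover property, the distance $dist(q,P_i)$ is recoverable from $L_f(q) \cap L_f(P_i)$; the whole proof is really about showing that the pruned \RkNN\ backward labels retain every hub needed to recover these particular distances.

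The key steps, in order, would be as follows. \textbf{(1) $k$NN backward labels.} I would argue that looping over the forward labels of all objects and keeping, per hub, the $k+1$ pairs of smallest distance is correct for computing $k$NN distances among objects; here I would invoke the result attributed to \cite{abraham2012hldb} that retaining the $k$ smallest pairs per hub suffices, and justify the $+1$ shift by the observation already made in the text (an object is its own nearest neighbor, so to find the $k$ nearest \emph{distinct} objects we must keep one extra slot and later skip the self-index). \textbf{(2) Batch $k$NN calculations.} I would verify that \proc{BatchKnnCalc}, by combining each object's forward label with the $k$NN backward labels and skipping the pair whose stored index equals the object's own index, correctly computes $dist(P_i, p_k^{(i)})$ for every $i$; the cover property guarantees that for any two objects the minimizing hub appears in both labels, and the skip condition on Line~8 removes exactly the self-distance. \textbf{(3) \RkNN\ backward labels.} This is the crucial step: I would show that discarding, for object $P_i$, every forward-label pair $(u,d)$ with $d > dist(P_i,p_k^{(i)})$ (Line~5 of \proc{RknnLab}) cannot cause the online phase to miss a reverse neighbor. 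The argument is that if $P_i \in \RkNN(q)$, then $dist(q,P_i) \leq dist(P_i,p_k^{(i)})$, and the hub $v$ on the shortest $q$–$P_i$ path satisfies $dist(P_i,v) \leq dist(q,P_i) \leq dist(P_i,p_k^{(i)})$, so the pair $(v,dist(P_i,v))$ survives the pruning and remains available to the online sweep. \textbf{(4) Online phase.} Finally I would confirm that the modified one-to-many sweep over $q$'s forward label against the \RkNN\ backward labels computes, for each surviving object index $i$, the minimum of $dist(q,v)+dist(v,P_i)$ over matching hubs, which by the cover property together with Step~(3) equals $dist(q,P_i)$ exactly when $P_i$ is a genuine reverse neighbor, and leaves $out[i]=\infty$ otherwise.

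The main obstacle I expect is Step~(3): I must prove that the one-sided pruning on the object side does not destroy the cover property \emph{for the distances we care about}. The subtlety is that the cover property is a global guarantee about all $s$–$t$ pairs, but after pruning we no longer have full labels, only a filtered subset on the object side; I need the monotonicity fact that any hub $v$ lying on a shortest $q$–$P_i$ path has $dist(P_i,v) \leq dist(q,P_i)$, so that whenever $q$ is close enough to make $P_i$ a reverse neighbor, the relevant hub's distance is automatically within the retained threshold and hence never pruned. A secondary point worth stating carefully is the handling of ties, which the text flags for small-diameter graphs: I would note that keeping $k+1$ (resp.\ using $\leq$ comparisons in the push operation and in Line~5) preserves correctness even when several objects sit at equal distance, so no genuine reverse neighbor is lost at a tie boundary. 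Once these two points are nailed down, correctness of the remaining steps follows directly from the cover property and the established $k$NN-via-hub-labels machinery.
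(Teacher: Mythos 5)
Your proposal is correct and follows essentially the same decomposition as the paper's proof: correctness of the $k$NN backward labels and batch $k$NN stage via the methodology of Abraham et al., safety of the \RkNN\ backward-label pruning, and correctness of the online sweep with its threshold check. In fact your Step~(3) — the observation that any covering hub $v$ on a shortest $q$--$P_i$ path satisfies $dist(P_i,v) \leq dist(q,P_i) \leq dist(P_i,p_k^{(i)})$ and therefore survives the pruning — makes explicit the key argument that the paper's proof only asserts, so your write-up is a strictly more rigorous rendering of the same route.
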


\begin{proof}
Building the \kNN\ backward labels and then performing the batch \kNN\ calculations to calculate the \kNN\ of each object is correct, because it follows the methodology of Abraham et al.~\cite{abraham2012hldb} who proved its correctness. 
Building the \RkNN\ backward labels is also correct, since we just reorder all labels of the objects according to hub, except those that correspond to distances greater than the \kNN\ of its object. This ensures than we can calculate correct distances to any of those objects from any query vertex, except when this query vertex is farther than the \kNN\ of a specific object. The online phase is also correct, since it operates on the \RkNN\ backward labels and updates the result vector $out$ for a specified object, only when the calculated distance is smaller or equal than the distance of the \kNN\ of this object (Line 8,  procedure~\proc{OnlinePhase}). Therefore the \ReHub\ algorithm is also correct.   
\end{proof}

\begin{codebox}
\Procname{$\proc{{OnlinePhase}}\>(q,\id{P}, |P|, \id{k}, forwLabels,  $  $kNNResults, RkNNLab, out)$}
 \li Initialize$(out,(|P|, \infty))$
\li \For $i \gets 0 $ \To $forwLabels[q].size$
\li \Do
$hub \gets forwLabels[q][i].hub$
\li $d \gets forwLabels[q][i].dist$
\li \For $j \gets 0$ \To $RkNNLab[hub].size$
\li \Do
	$idx \gets RkNNLab[hub][j].idx$
	\li $d2 \gets d + RkNNLab[hub][j].dist  $
	\li \If $d2 < out[idx]  \And $ \\ 
	\> \>  $d2 \leq kNNResults[idx][k-1].dist $
\li \Then
$out[idx] \gets d2$ 
\End	
\End
\End
\End
\end{codebox}

The main advantage of the \ReHub\ algorithm, in comparison to previous works, is the separation between the costlier offline phase, which runs only once for a specific set of objects and the very fast online phase. An additional benefit of \ReHub\ compared to the works of~\cite{yiu2006,borutta2014} is that not only \ReHub\  calculates the \RkNN\ set of the query vertex but it also \emph{calculates the correct network distances from the query vertex to any of the objects belonging in the \RkNN\ set}. Regarding the online phase, operating on the \RkNN\ backward labels is significantly faster, since for large graphs those \RkNN\ backward labels are significantly fewer than the backward labels-to-many. Also the usage of object array indexes instead of the object IDs accelerates the whole process, since the final results vector $out$ is of size $|P|$ instead of $|V|$ which makes its initialization faster (Line~1, procedure~\proc{OnlinePhase}). Also, accessing the \kNN\ results of each object (Line~8) and the previous best value of results table (Line~8 and~9) are very cheap operations, since they operate on smaller vectors of size $|P|$. Moreover, the memory required for storing these intermediate data structures is also significantly smaller. This will be further quantified in the next section, where we analyze the complexity and memory requirements of the \emph{ReHub} algorithm.

\subsection{Complexity Analysis and Memory Requirements}
\label{sub:complexity_analysis}

If $D$ is the object density defined as $D=\frac{|P|}{|V|}$, then the number of objects is $D\cdot|V|$.  The forward label of each vertex has an average of $\frac{|HL|}{|V|}$ hubs, where $|HL|$ is the total number of labels created by the hub-labeling algorithm (PLL in our case). Especially in the case of the PLL algorithm this approximation is pretty accurate, since Akiba et al.~\cite{akiba2013f} have shown that the "\emph{size of the created labels does not differ much for different vertices and few vertices have much larger labels than the average}". Since we have  $D \cdot |V|$ objects and  $\frac{|HL|}{|V|}$ hubs per object, then the backward labels-to-many will have on average $D \cdot |HL|$ pairs. Regarding the offline phase, the \kNN\ backward labels construction needs to access  all those $D \cdot |HL|$ pairs (same as the backward labels-to-many) to construct the \kNN\ backward labels that have a maximum of $k+1$ pairs per hub. 
In the batch \kNN\ calculations, we have a total of  $D \cdot |V|$ $k$NN queries that each needs to access on average $ (k+1) \cdot \frac{|HL|}{|V|}$ pairs to create the \kNN\  results of size of $k \cdot D \cdot |V|$. Therefore, the complexity of the batch \kNN\ calculations will be $(k+1) \cdot D \cdot |HL|$. Finally, for the \RkNN\ backward labels construction we need to access $D\cdot |HL|$ pairs (same as the backward labels-to-many) and the  $k \cdot D \cdot |V|$ results (to retrieve the worst $k$ label per object). Conclusively, both the \kNN\ and \RkNN\ backward labels construction have a complexity of  $D \cdot |HL|$ each (since $|HL| >> |V|$), where the most costly batch \kNN\ calculations stage has a complexity of $(k+1) \cdot D \cdot |HL|$.   

Regarding the online phase, for a very large number of $k$, the online phase of \emph{ReHub} will degrade to a one-to-many query between the query vertex $q$ and the set of objects~$P$. Therefore, we will first analyze the complexity of an one-to-many HL query. As showed earlier,  the backward labels-to-many will have on average $D \cdot |HL|$ pairs. If those pairs are equally distributed per hub, then each hub on the  backward labels-to-many will have an average of  $D \cdot \frac{|HL|}{|V|}$ pairs. Since the forward label of the  query vertex~$q$ will have on average of $\frac{|HL|}{|V|}$ hubs, an one-to-many query from the query vertex will access on average $D \cdot ({\frac{|HL|}{|V|}})^2$ pairs. Thus, the online phase of \emph{ReHub} will access  $\varepsilon \cdot D \cdot ({\frac{|HL|}{|V|}})^2$ pairs, where $\varepsilon < 1$ (since the size of the \RkNN\ backward labels is smaller than the backward labels-to-many) and $\varepsilon = f(k,D)$, i.e., the value of $\varepsilon$ depends on the density $D$ and the cardinality $k$ of the \RkNN\ results. In fact, our experimentation have showed that $\varepsilon$ becomes smaller for larger values of $D$ and smaller values of $k$. The aforementioned theoretical results are summarized in Table \ref{tab:complexity} where we also report the memory required for storing the results of each stage, considering the fact that each pair requires $5$ bytes for storage ($4$ bytes for object index + $1$ byte for distance) and the result of the online phase is a sized $D \cdot |V|$ vector of distances.

\begin{table}[tb]
\centering
\caption{\emph{ReHub} complexity and memory requirements}
\begin{tabular}{|c|c|c|}  \hline 
\bfseries{  } & \bfseries{  } & \bfseries{ Memory for  } \\  
\bfseries{ Stage } & \bfseries{ Complexity } & \bfseries{ storing result (B) } \\  \hline 
\kNN\ backward labels construction &  $D \cdot |HL|$  & $ 5 \cdot (k+1) \cdot |V| $  \\ \hline 
Batch \kNN\ calculations &  $(k+1) \cdot D \cdot |HL|$  &  $5 \cdot k \cdot D \cdot |V|$  \\ \hline 
 \RkNN\ backward labels construction & $D \cdot |HL|$  &  $ 5 \cdot \varepsilon \cdot  D \cdot |HL| $  \\ \hline 
 Online Phase &  $\varepsilon \cdot D \cdot ({\frac{|HL|}{|V|}})^2$   & $D \cdot |V|  $  \\ \hline 
\end{tabular}
 \label{tab:complexity}
\end{table}      

Our theoretical evaluation shows that even for large values of $k$ where the online phase of \emph{ReHub} would converge to an one-to-many query, \emph{ReHub}'s online performance will remain excellent, as long as the fraction $\frac{|HL|}{|V|}$ is relatively small, i.e., the number of created labels is proportionate to the number of graph vertices. As our experimentation showed (see Section~\ref{sec:experiments}), this fraction $\frac{|HL|}{|V|}$  was below 5,000 for all network graphs we experimented with.

\subsection{Extension to Directed and Weighted Graphs}
\label{sub:directed_graphs} 

Throughout this work and the experimentation described in Section~\ref{sec:experiments}, we use undirected and unweighted graphs. However, the \emph{ReHub} algorithm may be easily extended to directed graphs with the following changes: (i) In the offline phase the $k$NN backward labels must be constructed from the backward labels (ii) In the online phase we must use the backward labels of query vertex $q$. Note that \emph{most previous methods like \cite{yiu2006,borutta2014} have only been applied on undirected networks}. For weighted graphs, \emph{ReHub} will work without requiring any further modifications.

\section{Experiments}
\label{sec:experiments} 


To evaluate the performance of \ReHub\ on various large-scale graphs, we conducted experiments on a workstation with a 4-core Intel i7-4771 processor clocked at 3.5GHz and 32 GB of RAM, running Ubuntu 14.04. Our code was written in C++ with GCC~4.8 and optimization level 3. We used OpenMP for parallelization. 

The network graphs used in our experiments were taken from the Stanford Large Network Dataset Collection~\cite{snapnets} and the 10th Dimacs Implementation Challenge website~\cite{dimacs2012}. All graphs are undirected, unweighted and strongly connected. We used colla-boration  graphs (DBLP, Citeseer1, Citeseer2)~\cite{geisbergerss2008}, social networks (Facebook~\cite{mcauley2012}, Slashdot1 and Slashdot2~\cite{leskovec2009}), networks with ground-truth communities (Amazon, Youtube)~\cite{yang2012b}, web graphs (Notre Dame)~\cite{reka1999} and location-based social networks (Gowalla)~\cite{cho2011}. 
The graphs' average degree is between $3$ and $37$ and the PLL algorithm creates $26 - 4,457$ labels per vertex, requiring $0.03 - 5,950s$ for the hub labels' construction (see~Table~\ref{tab:graph_stats}). For each individual \RkNN\ experiment we generate randomly $100$ sets of objects of size $|P|$ and then we generate $100$ random query points per set. \RkNN\ query times are then averaged over those 10,000 experiments.

\begin{table}[tb]
\centering
\caption{Networks graphs statistics}
{\small
\begin{tabular}{|c|c|c|c|c|c|}  \hline 
\bfseries{Graph} & \bfseries{ \textbar~V \textbar } & \bfseries{ \textbar~E \textbar} & \bfseries{ Avg degr. } & \bfseries{ \textbar~HL \textbar~/ \textbar~V \textbar } & \bfseries{PLL Preproc. Time (s)}\\ \hline
\bfseries{ Facebook } & 4,039 & 88,234 & 22 & 26 & 0.03\\ \hline
\bfseries{ NotreDame } & 325,729 & 1,090,108 & 3 & 55 & 6\\ \hline
\bfseries{ Gowalla  } & 196,591 & 950,327 & 5 & 100 & 13\\ \hline
\bfseries{ Youtube } & 1,134,890 & 2,987,624 & 3 & 167 & 123\\ \hline
\bfseries{ Slashdot0811  } & 77,360 & 469,180 & 6 & 204 & 11\\ \hline
\bfseries{ Slashdot0922  } & 82,168 & 504,230 & 6 & 216 & 13\\ \hline
\bfseries{ Citeseer1 } & 268,495 & 1,156,647 & 4 & 408& 110\\ \hline
\bfseries{ Amazon } & 334,863 & 925,872 & 3 & 689 & 230\\ \hline
\bfseries{ DBLP } & 540,486 & 15,245,729 & 28 & 3,628 & 5,720\\ \hline
\bfseries{ Citeseer2 } & 434,102 & 16,036,720 & 37 & 4,457 & 5,946\\ \hline
\end{tabular}  
}
 \label{tab:graph_stats}
\end{table}

\subsection{Performance and Memory Requirements}
\label{sub:performance} 

In our first round of experiments we evaluate the performance of \ReHub\ in comparison to the object density $D=|P| / |V|$. Figure~\ref{fig:d_k_1} reports the time required for the offline and online phases of \ReHub\ for $k=1$ and $D = \{0.001, 0.005, 0.01, 0.05, 0.1\}$, similar to the methodology followed in~\cite{yiu2006}. For the offline phase, we parallelized only the slower substage of batch \kNN\ computations from objects. Online phase is always sequential. 

\begin{figure}[!tb]
\centering
 \subfigure[Offline phase. Y-axis is on logarithmic scale and time is reported in $ms$] 
{\includegraphics[width=0.497\columnwidth]{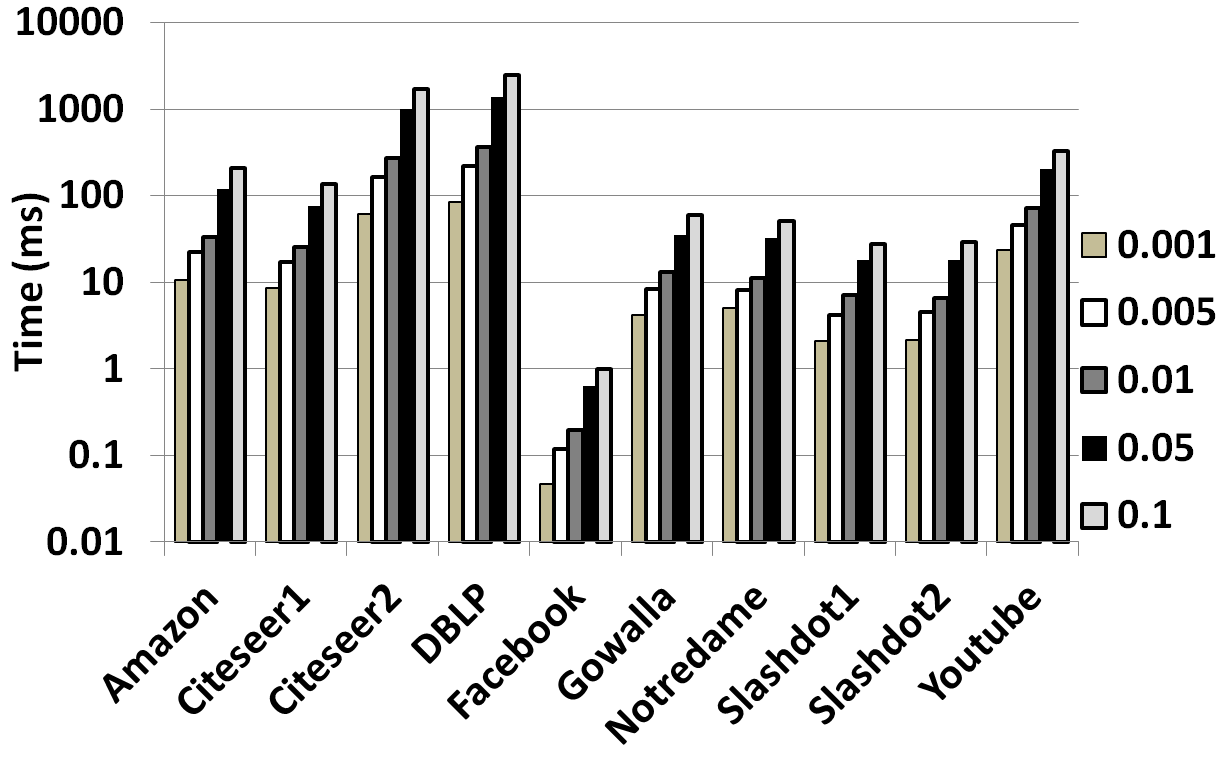}\label{fig:off1}} 
 \subfigure[Online phase. Y-axis is on logarithmic scale and time is reported in $\mu s$]
{\includegraphics[width=0.497\columnwidth]{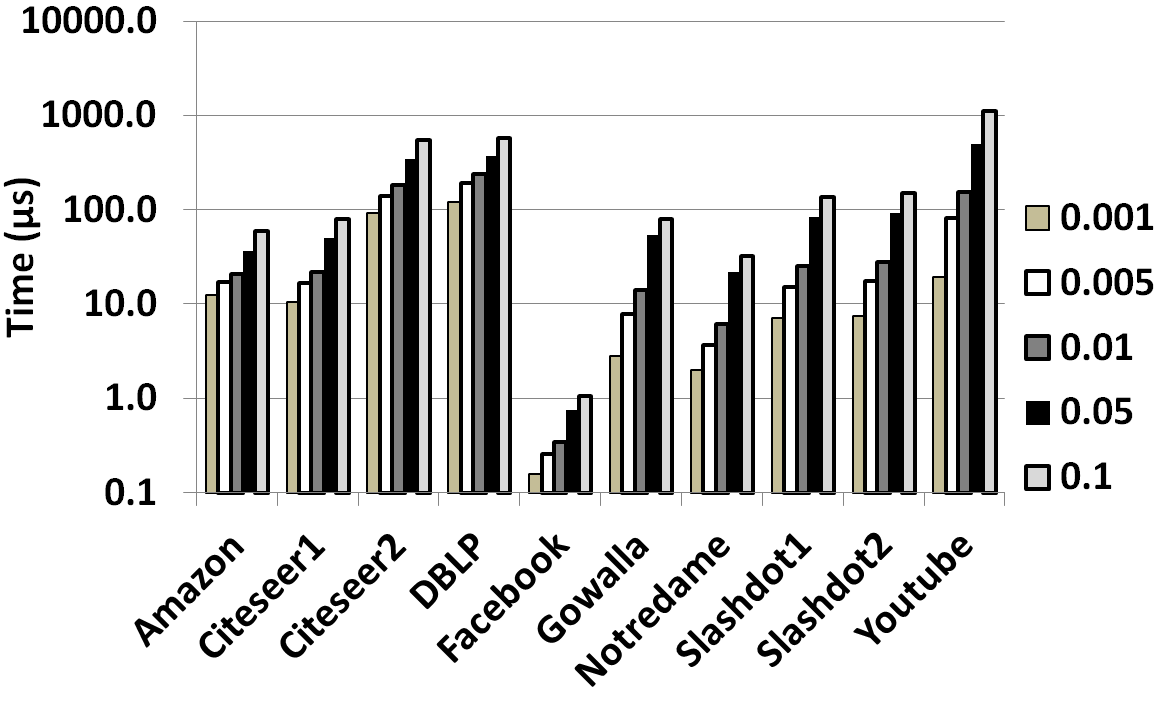}\label{fig:on1}}
 \caption{Offline and online phases of \ReHub\ for $k=1$ and varying values of $D$}
 \label{fig:d_k_1}
\end{figure}  

We see that the offlne phase of \ReHub\ for most graphs takes less than $250ms$ for all values of $D$. Even for the worst performing graphs (DBLP, Citeseer2) it takes less than $1s$, except for very dense distributions of objects ($D=0.1$). Considering the fact, that such dense distribution of objects is not common and comparing the offline phase's time with the construction of the hub labels (Table~\ref{tab:graph_stats}), even the time required for the offline phase could be considered negligible. Regarding the online phase, results are even more impressive. For all graphs and all values of $D$, the online phase takes less than $0.6ms$, except for the Youtube graph and $D=0.1$ ($1.1ms$). 
Generally speaking, both phases perform worse for increasing values of $D$ (and $|P|$) and for larger number of labels per vertex (e.g. DBLP, Citeseer2), as predicted by the theoretical analysis of the $ReHub$ algorithm presented in Section~\ref{sub:complexity_analysis}.

In terms of memory requirements, Figure~\ref{fig:memory1} reports the memory required for storing the additional data structures for \ReHub\ (\kNN\ backward labels, \kNN\ results per object and the \RkNN\ backward labels) and the size of the \RkNN\ backward labels in comparison to the backward labels-to-many, for the same setting as our previous experiment (i.e., for $k=1$ and $D = \{0.001, 0.005, 0.01, 0.05, 0.1\}$). Results show that the memory required for the additional data structures for \ReHub\ is less than $13Mb$ even for the worst performing graphs (DBLP, Citeseer2) and the size of the \RkNN\ backward labels can be more than $100 \times$ smaller than the size of the backward labels-to-many for very dense objects ($D=0.1$), i.e.,  the corresponding online phase there will be consequently $100 \times$ faster than an one-to-many query. This shows that even for such large values of the density $D$ (which constitutes the worst-case scenario for $ReHub$) its online phase will still remain very fast and efficient. 

\begin{figure}[!tb]
\centering
 \subfigure[Index size (Mb) for $ReHub$. Y-axis is on logarithmic scale] 
{\includegraphics[width=0.497\columnwidth]{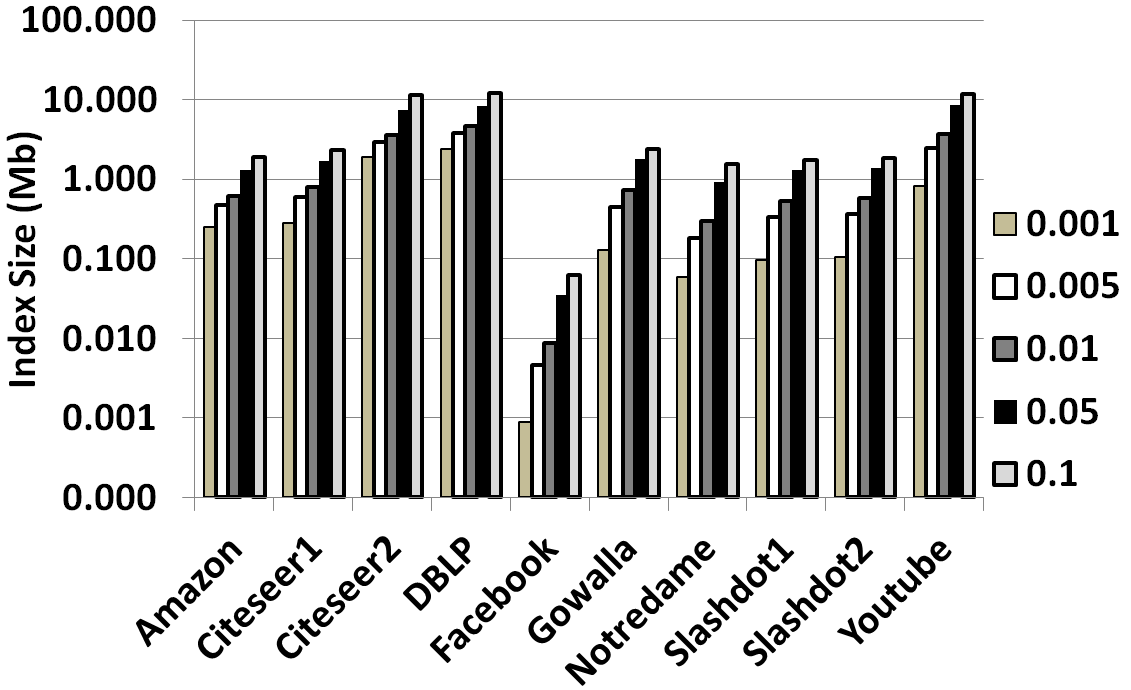}\label{fig:memory1a}} 
 \subfigure[Size of \RkNN\ backward labels in comparison to backward labels-to-many. Y-axis is on logarithmic scale]
{\includegraphics[width=0.497\columnwidth]{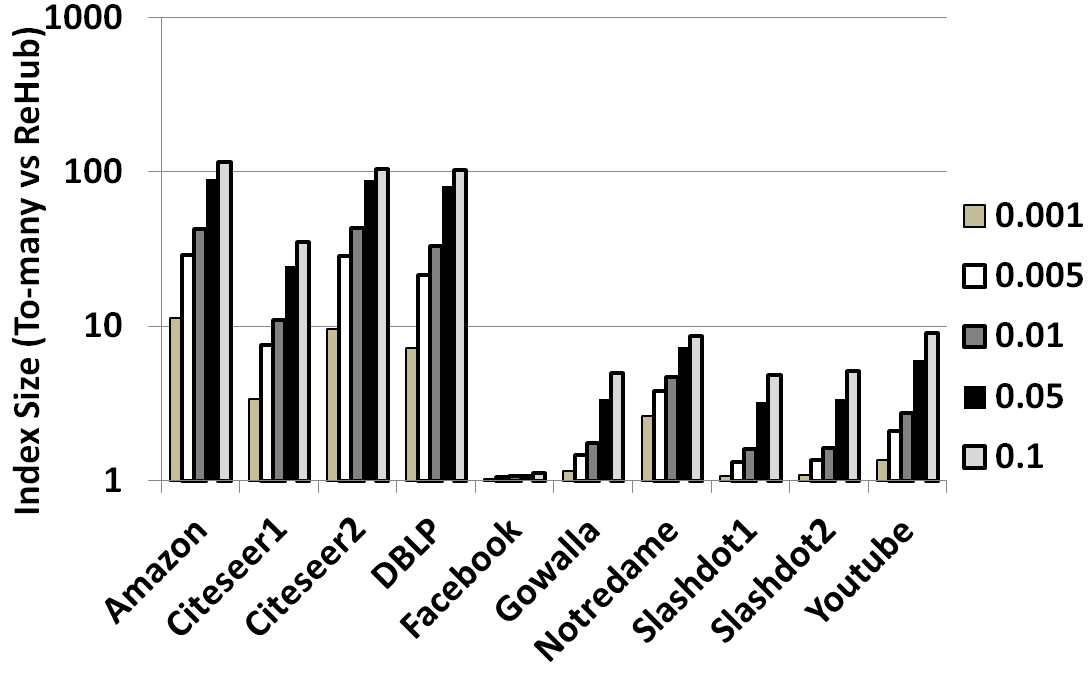}\label{fig:memory1b}}
 \caption{Memory footprint of \ReHub\ for $k=1$ and varying values of $D$}
 \label{fig:memory1}
\end{figure}  

In our second round of experiments, similar to~\cite{yiu2006}, we assess the performance and memory characteristics of the \ReHub\ algorithm in comparison to $k$. To this end, Figure~\ref{fig:k_d_0_01} reports the time required for the offline and online phases of the \ReHub\ algorithm for $D=0.01$ and $k = \{1, 2, 4, 8, 16,32\}$. Again, for the offline phase, we parallelized only the batch \kNN\ computations from objects.  

\begin{figure}[!tb]
\centering
 \subfigure[Offline phase. Y-axis is on logarithmic scale and time is reported in $ms$]
{\includegraphics[width=0.497\columnwidth]{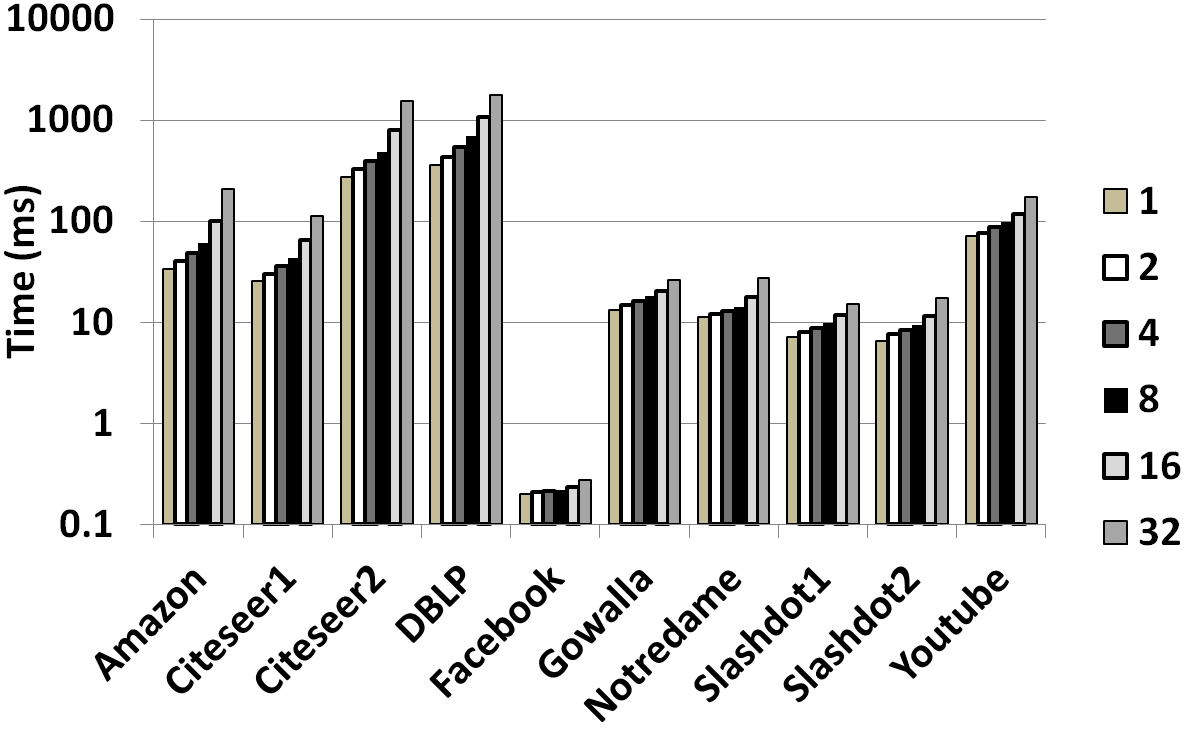}\label{fig:off2}} 
 \subfigure[Online phase. Y-axis is on logarithmic scale and time is reported in $\mu s$]
{\includegraphics[width=0.497\columnwidth]{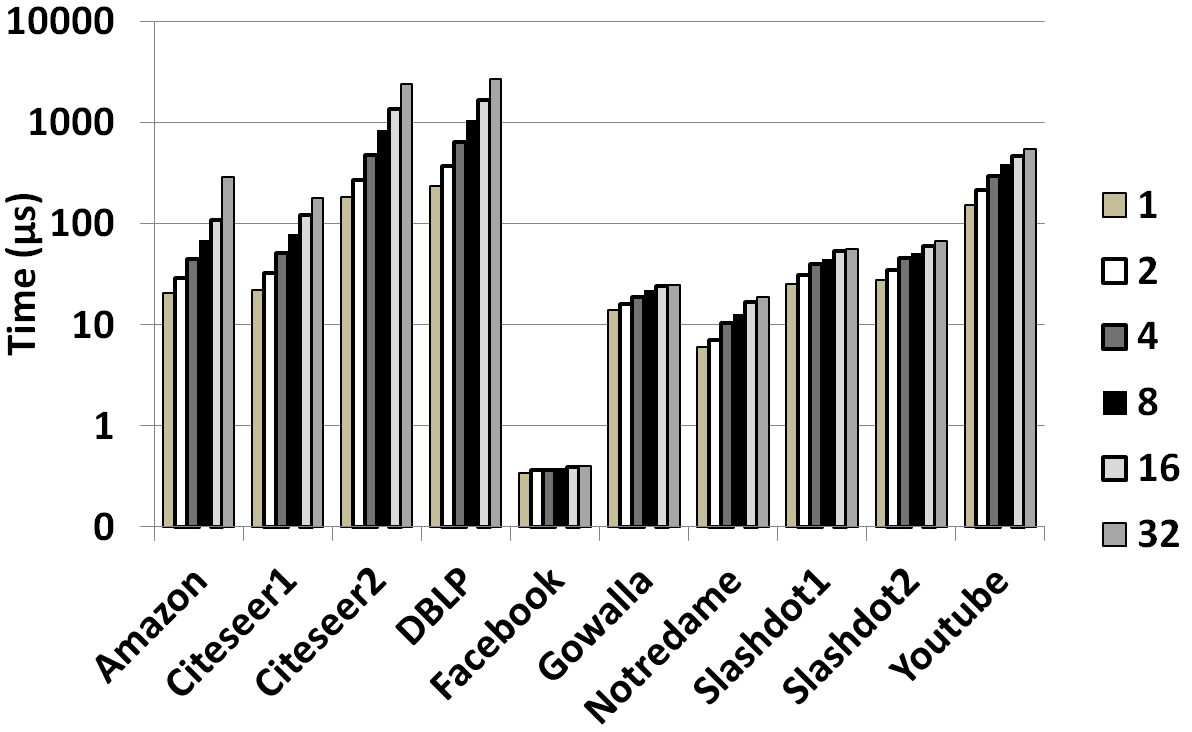}\label{fig:on2}}

 \caption{Offline and online phases of \ReHub\ for $D=0.01$ and varying values of $k$}
 \label{fig:k_d_0_01}
\end{figure}  

As before, the offlne phase of \ReHub\ takes less than $250ms$ for most graphs and values of~$k$. Even for the worst performing graphs (DBLP, Citeseer2) it takes less than~$1.1s$, except for $k=32$ ($1.7s$). The online phase takes less than $1.7ms$, except the Citeseer2 and DBLP graphs and $k=32$ ($2.4ms$ and $2.7ms$ respectively). In conclusion, although \ReHub's  performance degrades for increasing values of $k$, its performance remains excellent throughout. Interestingly enough, for the top-4 graphs in terms of forward labels per vertex (Amazon, Citeseer 1 and 2, DBLP), the offline phase is $4.4 {-} 6 \times$ slower and the online phase $8{-}14 \times$ slower for $k=32$ than for $k=1$. This further demonstrates \emph{how the reduced size of the \RkNN\ backward labels (that depends on $k$) improves $ReHub$'s online phase performance}, in comparison to running a plain \emph{one-to-many} HL query between the query point and the objects (which would be much slower, even than the value observed for $k=32$). 

Regarding memory requirements, Figure~\ref{fig:memory2} reports the memory required for storing the additional data structures for \ReHub\ (\kNN\ backward labels, \kNN\ results per object and \RkNN\ backward labels) and the size of the \RkNN\ backward labels in comparison to the backward labels-to-many, with respect to varying values of $k$ (again for $D=0.01$). Results show that the memory required for the additional data structures for \ReHub\ is less than $50Mb$ even for $k=32$ and the worst performing graphs, whereas the size of the \RkNN\ backward labels can be more than $3 \times$ smaller than the size of the backward labels-to-many for the same value of $k$, i.e., the corresponding online phase will be consequently $3 \times$ faster than an one-to-many query, even for $k=32$. This shows that even for large values of $k$, $ReHub$'s online phase will still be very fast and efficient.   Moreover, our \emph{experimental results are entirely consistent with the theoretical analysis of $ReHub$ provided in Section~\ref{sub:complexity_analysis}} and show that the variable $\varepsilon$ introduced there, gets progressively smaller for larger values of $D$ and smaller values of $k$. 

\begin{figure}[!tb]
\centering
 \subfigure[Index size (Mb) for $ReHub$. Y-axis is on logarithmic scale] 
{\includegraphics[width=0.497\columnwidth]{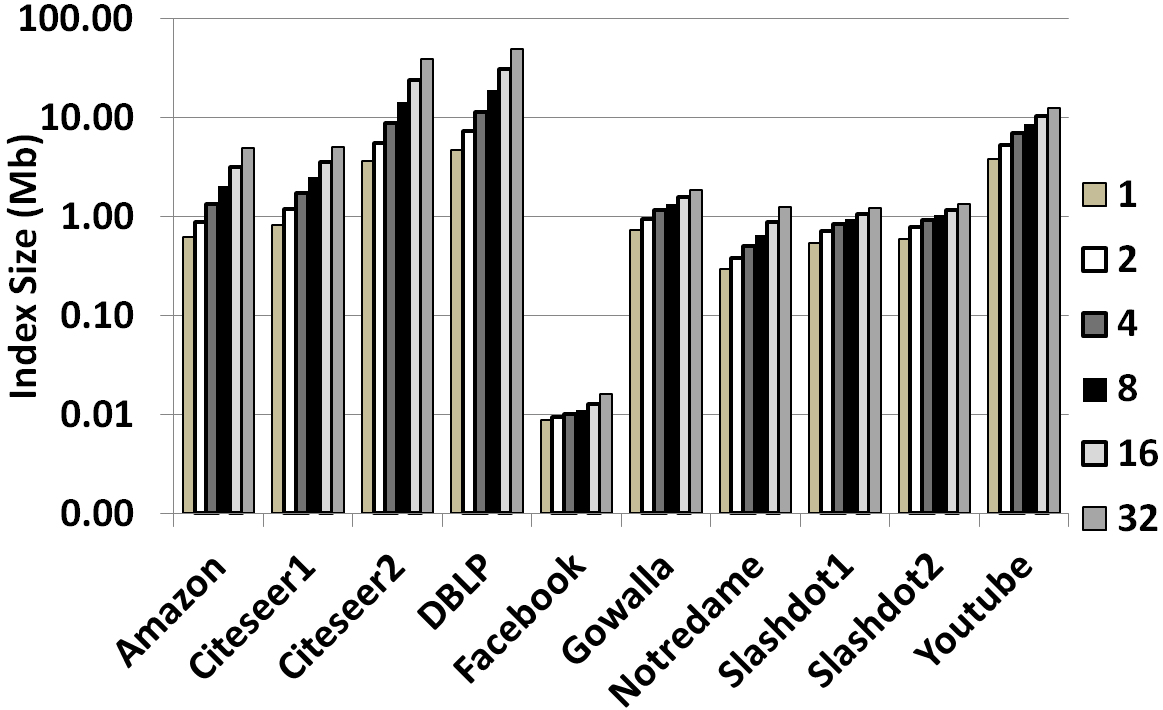}\label{fig:memory2a}} 
 \subfigure[Size of \RkNN\ backward labels in comparison to backward labels-to-many. Y-axis is on logarithmic scale]
{\includegraphics[width=0.497\columnwidth]{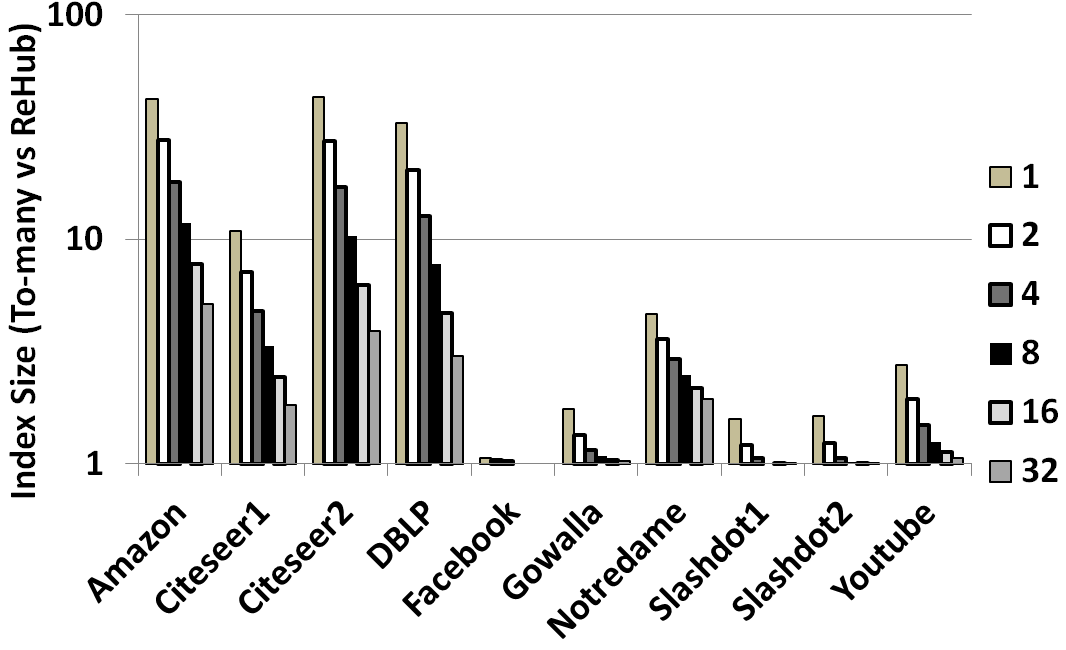}\label{fig:memory2b}}
 \caption{Memory footprint of \ReHub\ for $D=0.01$ and varying values of $k$}
 \label{fig:memory2}
\end{figure}  

On our third round of experiments, we evaluate the impact of objects distribution to ReHub's performance. To that purpose, we adapt a methodology similar to~\cite{delling2011g}. We pick a vertex at random and run BFS from it until reaching a predetermined number of vertices $|B|$. If $B$ is the set of vertices visited during this search, we pick our objects $O$ as a random subset of $B$. We keep the density of objects steady at $D=0.01$ and we experiment with different values of $|B|$ represented as percent of the total graph vertices. 

\begin{figure}[!tb]
\centering
 \subfigure[Offline phase. Y-axis is on logarithmic scale and time is reported in $ms$]
{\includegraphics[width=0.497\columnwidth]{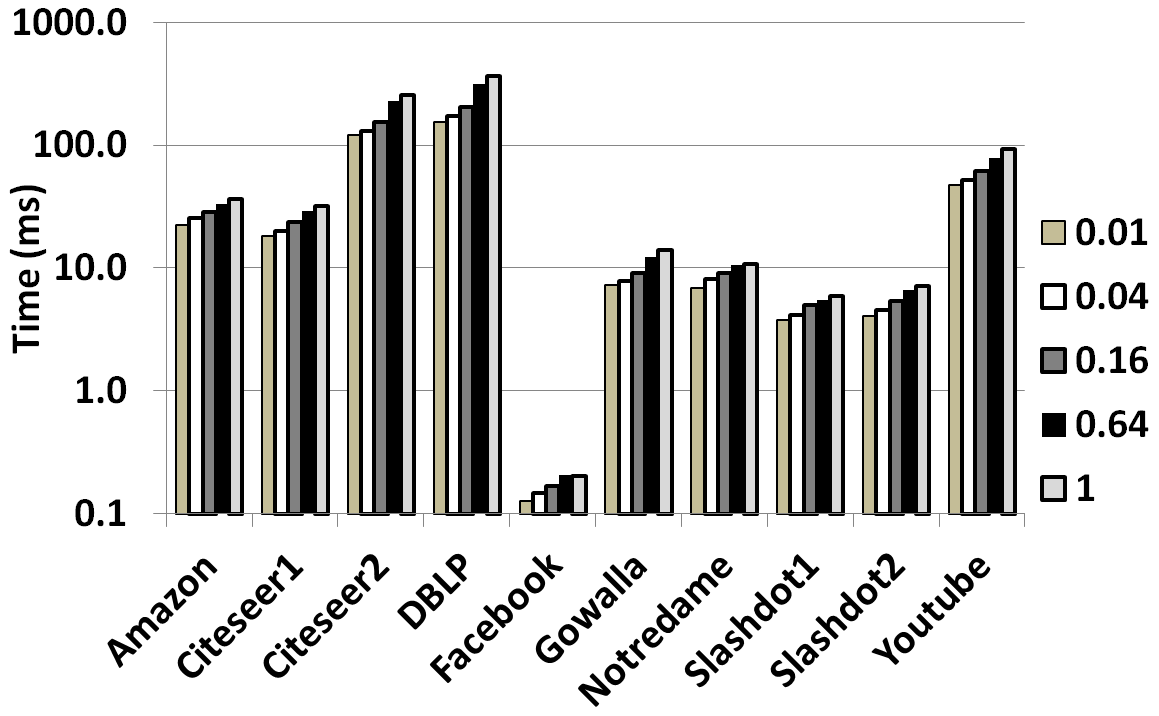}\label{fig:off3}} 
 \subfigure[Online phase. Y-axis is on logarithmic scale and time is reported in $\mu s$]
{\includegraphics[width=0.497\columnwidth]{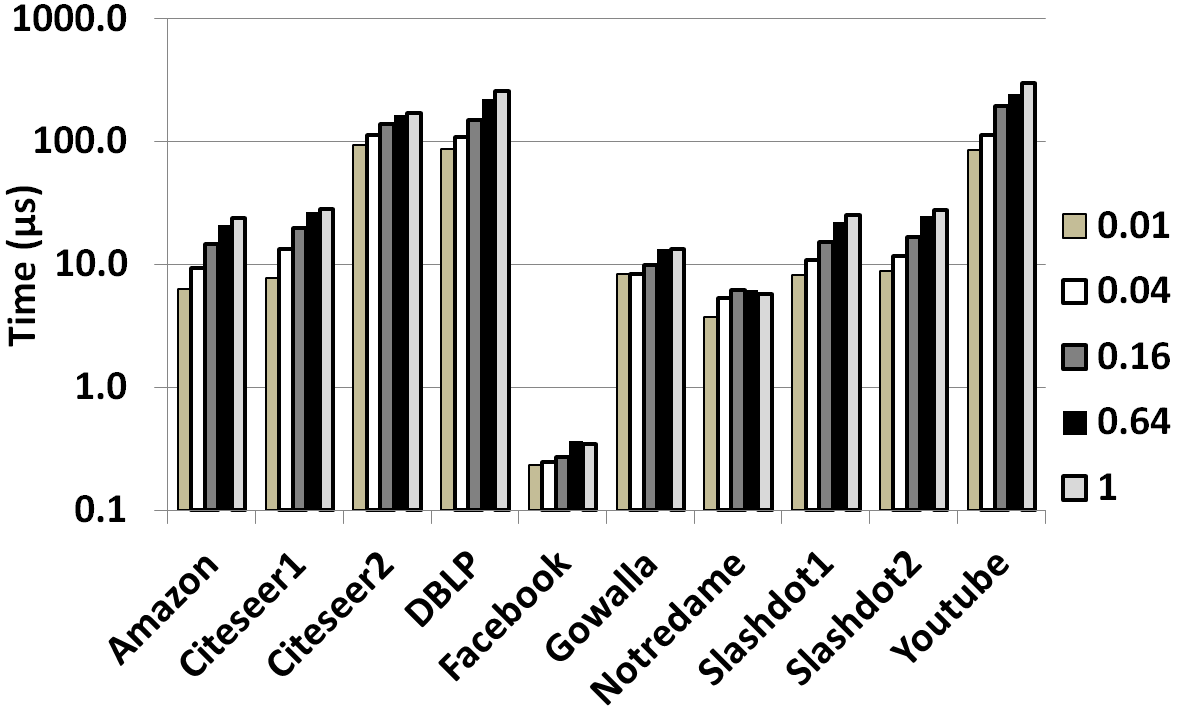}\label{fig:on3}}

 \caption{Offline and online phases of \ReHub\ for $k=1$, $D=0.01$  and varying values of $B$}
 \label{fig:k_b_0_01}
\end{figure} 

Again, we see that $ReHuB$ provides excellent performance both for the online and offline phase, regardless of the objects' distribution. In fact, ReHub performance is even better when the objects are more concentrated within the graph (e.g., for $B=0.01$) instead of randomly distributed objects ($B=1$). As a result, the offline phase is $1.6{-}2.3 \times$ faster for $B=0.01$ than $B=1$ and the online phase is $1.5{-}3.8 \times$  faster for $B=0.01$ than $B=1$. This great improvement, especially in the online phase, is attributed to the fact that the \RkNN\ backward labels are smaller, since objects are closer to each other, which facilitates faster online queries. This fact further testifies to the robustness of $ReHub$, even for more skewed distributions of objects. This is also evident in Figure~\ref{fig:memory3}  which shows that the memory requirements of $ReHub$ are smaller for smaller values of $B$ and the difference size between the RkNN\ backward labels in comparison to the backward labels-to-many is also amplified for smaller values of~$B$.

\begin{figure}[!tb]
\centering
 \subfigure[Index size (Mb) for $ReHub$. Y-axis is on logarithmic scale] 
{\includegraphics[width=0.497\columnwidth]{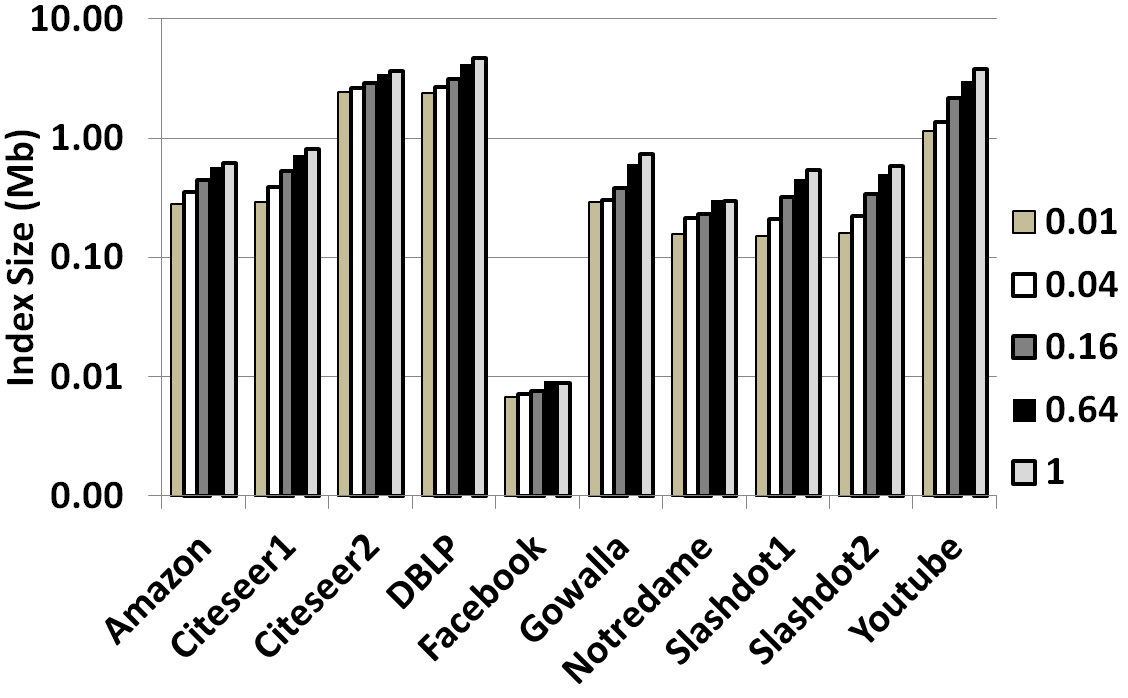}\label{fig:memory3a}} 
 \subfigure[Size of \RkNN\ backward labels in comparison to backward labels-to-many. Y-axis is on logarithmic scale]
{\includegraphics[width=0.497\columnwidth]{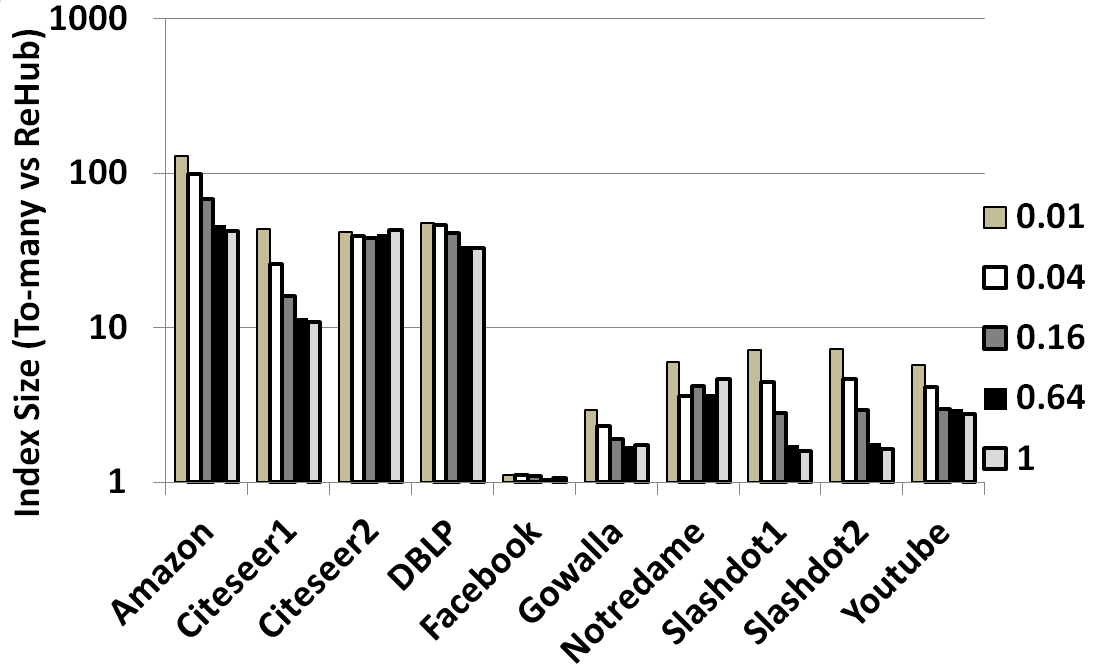}\label{fig:memory3b}}
 \caption{Memory footprint of \ReHub\ for $k=1$, $D=0.01$  and varying values of $B$}
 \label{fig:memory3}
\end{figure}


\subsection{Comparison with Previous works}
\label{sub:comparison} 

Our experimentation has shown that \ReHub\ exhibits excellent query performance and requires very small additional memory for all tested networks, regardless of the object density, the number~$k$ of \RkNN\ neighbors or the distribution of objects. In comparison to previous works, \ReHub\ \emph{may handle two orders of magnitude larger, denser networks than \cite{yiu2006,safar2009,borutta2014}, may scale easily for $k=32$}, where previous secondary storage methods have only been tested for up to $k=8$~\cite{yiu2006,borutta2014} or $k=20$~\cite{safar2009}. But even then,  e.g., for $k=8$ those methods require more than $300ms$~\cite{borutta2014}, whereas for similarly small networks (e.g. Gowalla) \ReHub's offline phase requires  $<20ms$ and the online phase $< 0.02ms$. Even for larger networks, the online phase typically requires less than $1ms$, i.e., \emph{ReHub is at least 3 orders of magnitude faster}. Thus,  \ReHub\ is the most complete solution for \RkNN\ queries on social and collaboration networks and the best overall contender for real-time applications.  In addition, Efentakis et al~\cite{efentakis2015b} have showed how the online phase of \ReHub\  may be translated to a simple SQL query on a open-source database engine, making \ReHub\ the only \RkNN\  solution that may also be used on a pure SQL context, for even greater versatility and scalability.    

Moreover, we showed that \ReHub\ can handle networks where the size of the created labels are more than three thousand hubs per vertex (e.g., DBLP, Citeseer2) and hence, the proposed algorithm will be even more efficient and faster when applied to sparser graph classes such as road networks, where the size of the created labels are less than a few hundred hubs per vertex, even for worse behaving metrics (e.g travel distances). 
\vspace{-5pt}

\pdfoutput=1

\section{Conclusion and Future Work}
\label{sec:conclusions}


This work introduced \ReHub, a novel main-memory algorithm that extends the hub labeling method to efficiently handle \RkNN\ queries on large-scale graphs. Our experimentation showed that \ReHub\ provides excellent query performance, has minimal memory requirements, and scales very well with the network size, the object density, the object distribution, the size of the labels, and the cardinality of the reverse $k$-nearest neighbor result. 
Given these results, \ReHub\ is the best overall and most complete solution for this type of queries and may also be used in the context of real-time applications. 

Directions for future work are to extend \ReHub\ towards handling object updates, i.e., objects may be added or deleted from the objects' set. Not having to redo the offline phase from scratch for such updates will significantly increase the practical applicability of the algorithm. Also testing our results on directed graphs and road networks will further showcase the algorithm's performance with respect to a wider range of graph classes, additional hub labelling algorithms, and domains.


\vspace{-5pt}
\bibliographystyle{abbrv}
\bibliography{gis2014}



\end{document}